\newtheorem{lemma}{Lemma}
\newtheorem{proposition}{Proposition}
\begin{document}
\title{Targeting Customers  for Demand Response Based on Big Data}
\author{Jungsuk Kwac,~\IEEEmembership{Student Member,~IEEE} and Ram Rajagopal,~\IEEEmembership{Member,~IEEE}
\thanks{J. Kwac is with the Department of Electrical Engineering and the Stanford Sustainable Systems Lab, Department of Civil and Environmental Engineering, Stanford University, CA, 94305. Email: kwjusu1@stanford.edu.}%
\thanks{R. Rajagopal is with the  Stanford Sustainable Systems Lab, Department of Civil and Environmental Engineering, Stanford University, CA, 94305. R. Rajagopal is supported by the Powell Foundation Fellowship. Email: ramr@stanford.edu.} }


\maketitle
\begin{abstract}
Selecting customers for demand response programs is challenging and existing methodologies are hard to scale and poor in performance. 
The existing methods were limited by lack of temporal consumption information  at the individual customer level. 
We propose a scalable methodology for demand response targeting utilizing novel data available from smart meters. 
The approach relies on formulating the problem as a stochastic integer program involving predicted customer responses.
A novel approximation is developed  algorithm so it can scale to problems involving millions of customers. 
 The methodology is tested experimentally using real utility data. 
\end{abstract}

\begin{IEEEkeywords}
smart meter data, targeting, demand response, big data, algorithms, stochastic knapsack problem.
\end{IEEEkeywords}

\IEEEpeerreviewmaketitle
\section{Introduction}

Demand response (DR) programs have been expanded to match peak load growth and increases in supply-side uncertainty. 
The goal of a DR program is to elicit flexibility from loads by reducing or shifting consumption in response to external signals such as prices or curtailment indicators. 
Typically a program is designed to extract a targeted level of energy or power  from all the participating loads. 
The program \emph{operation yield} is the ratio between the actually extracted energy from the participants and the target level. 
Current program yields are low, in the range of 10\% to 30\%. 

Customer recruitment is essential for the success of demand response programs. 
Existing customer recruitment mechanisms are designed to identify customers that are likely to enroll. 
They utilize predictive marketing models (e.g. discrete choice) that utilize as inputs household related variables such as family income, enrollment in other programs and household size. The models are very successful in identifying customers that are likely to enroll, and typically 80\% of recruited customers enroll. Yet, the lack of individual level consumption information has prevented these recruitment efforts from achieving high operation yields. Besides yield, a DR program needs to ensure reliable performance. Customer enrollment directly influences reliability as each consumer might provide different levels of certainty on the demand curtailment they can offer during a DR event. 

The widespread deployment of  advanced metering infrastructure (AMI) data can significantly change the approach to customer recruitment. This paper investigates how to target customers for DR programs based on this data. The significant recruitment cost and scalable progressive deployment of DR programs require an efficient way to select a small number of appropriate customers from a large population.
Since enrollment decisions are made in advance of the actual consumption period, only a prediction of per customer DR potential is available. The prediction can be made by analyzing historical high resolution consumption data for each customer. Given a prediction, customers need to be chosen in a way that balances the magnitude of demand response potential (reward) with the uncertainty in the prediction (risk). In fact, customers need to be considered as a portfolio, and an optimal trade-off between risk and reward is desirable. 

This paper develops a methodology for large-scale targeting that combines data analytics and a scalable selection procedure. The first  contribution is the formulation of the customer selection problem as a stochastic discrete optimization program. This program selects the best customers in order to maximize reliability (risk)  given a targeted DR amount.  The second contribution is a novel guaranteed approximation algorithm for this class of discrete optimization problems which might be of independent interest. The methodology is illustrated in two climate zones of Pacific Gas \& Electric, with more than 50,000 customers.

The literature on demand response programs is vast (e.g. \cite{han2008solutions}, \cite{borenstein2002dynamic}, \cite{albadi2007demand}) and includes the definitions of DR programs \cite{rahimi2010demand}, how to estimate the energy savings (baselining) \cite{mathieu2011quantifying} and other related areas. Recent papers have focused on designs for operating DR including distributed pricing mechanisms that adapt to data (\cite{li2011optimal}, \cite{dong2012distributed}, \cite{chen2012will}, \cite{dominguez2011distributed}), 
 simplified operation mechanisms with full information (\cite{zhong2013coupon}, \cite{mathieu2011quantifying}) and operations with partial information \cite{taylor2012price}. Integration of DR as a system reliability has also been well investigated in \cite{xie2011wind}. Finally, implementation in hardware of these solutions is being developed in \cite{lee2013energy}. 
Currently, most demand response targeting relies on segmentation of customers based on their monthly billing data or surveys \cite{lutzen2009}, \cite{moss2008market}. The growing availability of smart meter data has shown that such approaches are highly inaccurate since segments obtained from actual consumption  
differ from those obtained by alternative methods \cite{kwac2014seg}, \cite{brian2012}.

The paper is organized as follows. Section~\ref{probstate} provides the problem description including a stochastic knapsack problem (SKP) setting and response modeling. Section~\ref{method} presents a review of SKP and develops a novel heuristic algorithm to solve the proposed SKP. Section~\ref{exp} presents the experimental evaluation and validation of the methodology. Section~\ref{con} summarizes the main conclusions and discusses future work.

\section{Methodology}\label{probstate}

Given the large amount of data, the scalability of the approach is very important. Thus, we propose a quick linear response modeling and a novel heuristic to solve the SKP \eqref{skp}, which is basically an NP-hard problem. Fig. \ref{oflow} shows the overall DR program targeting flow in this paper with brief computation complexity information.
\begin{figure}[htbp]
\includegraphics[width=3.5in,height=1.4in]{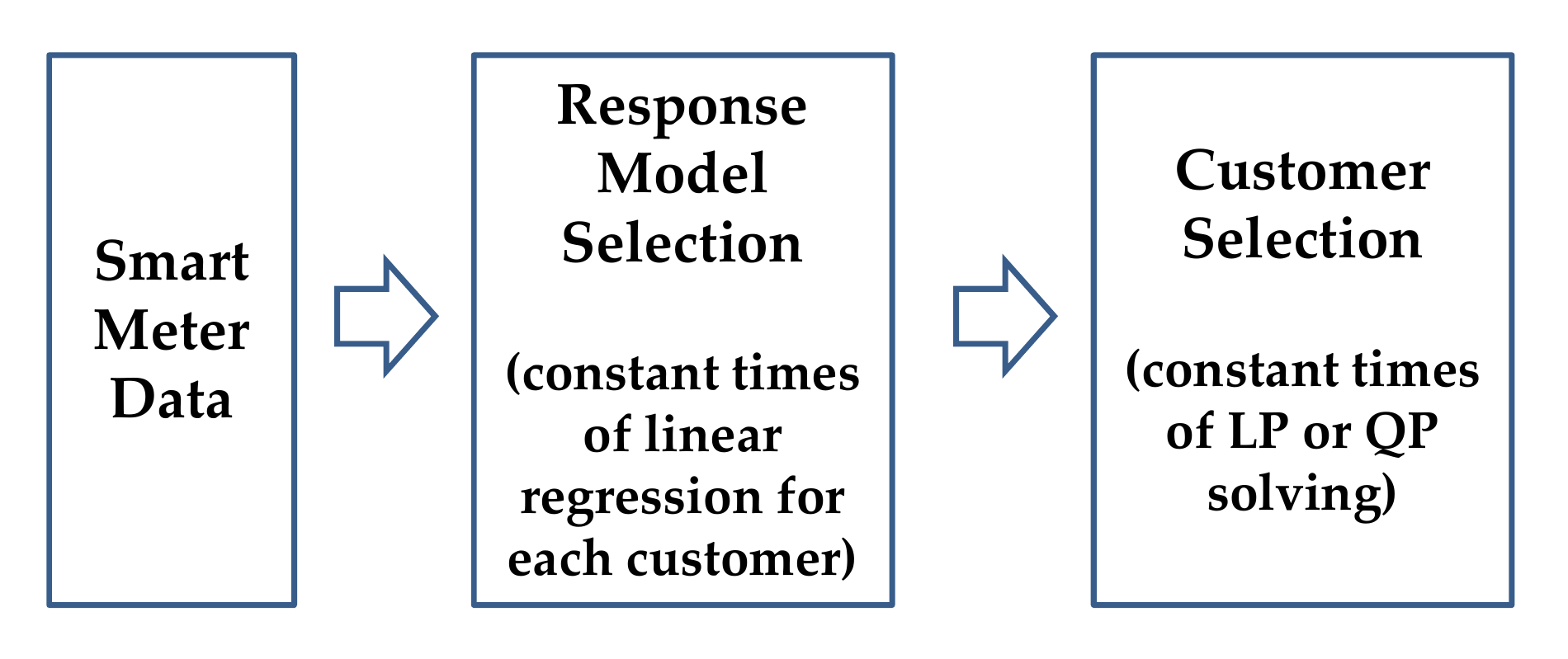}
\caption{Overall DR program targeting flow}\label{oflow}
\end{figure}

The remainder of the section details the methodology. It first proposes a reliability constrained selection algorithm that uses a probabilistic estimate of DR response. It then demonstrates how such response can be learnt from existing smart meter data. Notice that the specific approach to predict individual responses for different types of DR would differ, but each still  provides a \emph{probabilistic} estimate. For completeness, we illustrate the procedure for thermal DR. 

\subsection{Maximizing Demand Response Reliability}


There are $K$ potential customers that can provide DR service. Given the customer recorded data, the response of customer $k$ is a random variable $r_k$ corresponding to the energy saved during a DR event.  The distribution of $r_k$ is determined by fitting a response model corresponding to the type of DR and has a known joint probability distribution. The cost for customer $k$ to participate in the program is $c_k$. During planning, this cost represents the cost of marketing the program to a customer and rebates for a customer to purchase the resources to perform DR. The program operator has a budget  $C$ and desires an aggregate target response $T$ (in kWh) from the program with the maximum reliability possible. \emph{DR availability} is captured by the control variable $T$.  \emph{DR reliability} is naturally captured by the probability of the response exceeding the target $T$. The optimal DR program selection problem can then be stated  as
\begin{align} \label{skp}
&\max_{\bold{x}} \;P\left(\sum_{k=1}^{K} r_{k}x_{k} \ge T\right),\\
&\textrm{s.t.} \;\;\; \sum_{k=1}^{K} c_{k}x_{k} \le C, \;\;(\Leftrightarrow \sum_{k=1}^{K} x_{k} \le N, \textrm{ if $c_k$ is same} ),\nonumber\\
&\;\;\;\;\;\;\;\;\;x_{k} \in \{0,1\}\;\;, k = 1,...,K\nonumber
\end{align}
where $x_k$ represents whether a customer is recruited or not. Note that, if $c_k$ is same for all customers, the budget constraint is the same as limiting the number of participating customers by $N$. The program maximizes the reliability  of the DR program by recruiting customers within the program budget. The optimal reliability for  budget $C$ and target $T$ is given by objective function value  $p^*(C,T)$.  The function captures the tradeoff between DR availability and DR reliability for a budget $C$. The function has some important properties that conform to our intuition about the tradeoff. The objective function is monotonic decreasing in $T$ so $p^*(C,T_1) \geq p^*(C,T_2) $ if $T_1 \geq T_2$. The budget determines the constraints so $p^*(C_1,T) \leq p^*(C_2,T) $ if $C_1 \geq C_2$. 

The proposed optimization problem is a stochastic knapsack problem (SKP). SKPs are stochastic integer programs known to be NP-hard. The goal of this paper is to develop a novel efficient approximation algorithm that scales to $K$ in millions of customers. The efficient algorithm is used then to compute the function $p^*(C,T)$.  

An important additional assumption is that $K$ is large and $C$ is sufficiently large, so a significant number of customers are included. In that case, given a set of random variables for the response $r_k$, the total group response is approximately Gaussian distributed due to the central limit theorem: 
\begin{align} \label{skp2}
\sum_{k=1}^{K} r_{k}x_{k} &\sim N(\boldsymbol{\mu}^T \bold{x},\bold{x}^T\bold{\Sigma x}),
\end{align}
where $\;\bold{x}$ is the recruitment vector,  $N$ indicates a normal distribution, $\boldsymbol{\mu}$ is a vector of individual response means $\mu_k$ and $\bold{\Sigma}$ a covariance matrix with covariances $\Sigma_{jk}$ between responses. In practice, if the number of customers selected is in the order of 50, the response distribution is very close to normal. 

\subsection{Response modeling}

Estimating the response $r_k$ from provided data is an important challenge in practical applications. We illustrate here a model for estimating customer response in a specific program, but we note that our methodology applies in general, with the ability to define general models. A highlight of the methodology is that $r_k$ is a random variable, so models that are not accurate can still be useful. 

The customer response model specification depends on the design of the demand response program. Consider a Global Temperature Adjustment (GTA) program for HVAC (air conditioning) systems \cite{han2008solutions}. Such a program increases the temperature set point of the air conditioner for each customer by a fixed amount to reduce cooling power consumption.  Selecting customers with high energy saving potential during a DR event day and hour requires an accurate model to estimate the total energy consumed at each set point level.  If HVAC consumption is independently metered, a simple model can be built utilizing the observed consumption, external temperature and the utilized set point \cite{sonderegger1978dynamic}. In general, though, only total home consumption and external temperature are observed. 

The main proportion of temperature-sensitive energy consumption is from HVAC (heating, ventilation, and air conditioning) systems \cite{han2008solutions}. It has been observed that the relationship between energy consumption and outside temperature is piecewise linear, with two or more breakpoints \cite{mathieu2011quantifying}, \cite{birt2012disaggregating}, \cite{cancelo2008forecasting}. These descriptive models are typically fit utilizing past consumption data, in particular for the summer season. The power consumption of a customer $k$ at time $t$ on day $d$, $l_k(t,d)$, is modeled as 
\begin{align}\label{intermodel}
l_k(t,d) &= a_k(t)(To_k(t,d) - Tr_k)_+  \\
&+ b_k(t)(Tr_k - To_k(t,d))_+ + c_k(t) + e_k(t),\nonumber
\end{align}
where $To_k(t,d)$ is the outside temperature, $Tr_k$ is the breakpoint temperature which is the proxy of the temperature set point for the HVAC system, $a_k(t)$ is the cooling sensitivity, $b_k(t)$ is the heating sensitivity or the temperature sensitivity before turning on AC system, $c_k(t)$ is base load and $e_k(t)$ is a random variability. Typically, in the summer $b_k(t)$ is close to zero. In some cases,  $b_k(t) \approx a_k(t)$ since either the variability is large or only large temperatures are observed in the summer, thus a reliable estimate of $b_k(t)$ cannot be obtained. The model above is  closely related to the ETP model \cite{sonderegger1978dynamic}, assuming that the system is in thermal equilibrium so inner mass temperature and outside temperature are equal. 

To restrict the computational cost to find the best breakpoint, $Tr_k$ is assumed to be an integer, in the range 68-86${}^\circ F$, which is typical. Additionally, to prevent the cases that one or two dominant data points determine the breakpoint and provide invalid temperature sensitivity, we put a constraint on the breakpoint that both sides of the breakpoint should have at least certain fraction of data (e.g., we set 15\% in this paper). 

Model learning is performed in two steps. Minimization of residual sum of squares (RSS) is utilized to learn the parameters of the model $\{Tr_k, a_k(t), b_k(t), c_k(t)\}$ and the distribution of the error $e_k(t)$ from the observed data $l_k(t,d)$ and $To_k(t,d)$. An F-test is utilized to test if $a_k(t) = b_k(t)$ to prevent overfitting. If $a_k(t) = b_k(t)$, \eqref{intermodel} becomes \eqref{basic}, 
\begin{align}\label{basic}
l_k(t,d) &= a_k(t)To_k(t,d) + c'_k(t) + e_k(t).
\end{align}

The overall computation needed to fit the consumption model is to solve (at most) 20 linear regression models: one for each potential value of the breakpoint (at most 19) and one for fitting \eqref{basic}. The coefficients associated with the breakpoint with smallest RSS is selected as the estimate. The regression provides estimates of parameter values and errors, in particular an estimate of cooling sensitivity $\hat{a}_k(t)$ and the standard deviation of parameter error denoted by $\sigma (\hat{a}_k(t))$.  The distribution of the parameter estimate might not necessarily be Gaussian. Covariances $\textrm{COV}(\hat{a}_j(t), \hat{a}_k(t))$  between sensitivity estimates for different customers can be obtained by analyzing the residuals. 

The DR response model is postulated as follows. The savings in the response are obtained by increasing the set point temperature by $\Delta Tr_k$. We assume DR is utilized during very hot days, so $To_k(t,d) \geq Tr_k + \Delta Tr_k$. The response model is then postulated to be $r_k = {a}_k(t)\Delta Tr_k$ kWh.  The random variable $r_k$ has mean $\mu_k = \hat{a}_k(t)\Delta Tr_k$ and standard deviation 
$\sigma_k=\sqrt{\Sigma_{kk}}=\sigma(\hat{a}_k(t))\Delta Tr_k$. The covariance between any two responses $r_j$ and $r_{k}$ is given by $\Sigma_{jk} =  \textrm{COV}(\hat{a}_j(t), \hat{a}_k(t)) \Delta Tr_j \Delta Tr_k$.  The aggregate DR response is then distributed as in \eqref{skp2}.

\section{Algorithm}\label{method}
In this section we design a novel scalable algorithm for customer selection that enables construction of the DR availability-reliability tradeoff.  

\subsection{Optimization problem transformation}\label{trans}

Utilizing the normality assumption of the total response in the original SKP formulation \eqref{skp}, the following equivalence is easy to demonstrate:
\begin{align} \label{skp3}
p^*=\max_{\bold{x} \in \mathcal{C}} P\left(\sum_{k=1}^{K} r_{k}x_{k} \ge T\right) &\Leftrightarrow \; \rho^* = \min_{\bold{x} \in \mathcal{C}} \frac{T-\boldsymbol{\mu}^T \bold{x}}{\sqrt{\bold{x}^T\bold{\Sigma x}}}. 
\end{align}
The set $\mathcal{C}$ represents the constraint set of the original SKP. The equivalency reveals some qualitative insights about the optimal solution. Consider the case $c_k = 1$,  so the budget $C$ is the number of customers.If $C$ is small, then its likely the target level  $T$ exceeds the group response so $\rho^* \geq 0$. The mechanism mostly selects  customers with high mean, high variance  responses resulting in low reliability. If $C$ is sufficiently large, then $\rho^* < 0$, and the mechanism selects high mean, low variance customer responses. Consequently DR reliability is high. A simple rule of thumb is that $C$ should be usually set so that the sum of the $C$ highest response means $\mu_k$ exceeds $T$. 

\subsection{Previous approaches to solve the SKP}\label{prior}

Several approaches address solving the transformed SKP in \eqref{skp3} (\cite{geoffrion1967solving}, \cite{henig1990risk}). The basic principle is to utilize \textit{Lemma~\ref{lem-skp}} proved independently for completeness.  
\begin{lemma}\label{lem-skp}
The customer selection SKP~\eqref{skp} is equivalent to the optimization program 
\begin{align} \label{simple_skp}
&\min_{\mu,\sigma^2} \frac{T- \mu}{\sqrt{\sigma^2}} \\
&\textrm{s.t.} \;\;\; (\mu, \sigma^2) \in \mathcal{H} \nonumber,
\end{align} 
where the constraint set is defined as 
\begin{align*} 
 \mathcal{H} &=  \textrm{Conv}\{(\mu, \sigma^2): \bold{x} \in \mathcal{H}_X, \mu=\boldsymbol{\mu}^T \bold{x}, \sigma^2 = \bold{x}^T\bold{\Sigma} \bold{x}  \},\\
 \mathcal{H}_X &=\left\{\bold{x}: \bold{x} \in \{0,1\}^K,  \bold{c}^T\bold{x} \leq C, \boldsymbol{\mu}^T \bold{x}  \geq T,  \bold{x}^T\bold{\Sigma} \bold{x} \geq 0\right\},
\end{align*} 
under the assumption that $\mu^* \geq T$. Furthermore,  there is a map from the optimum $(\mu^*, \sigma^{*2})$ to $\bold{x}^*$. Additionally, there exists $0 \leq \lambda \leq 1$ so the optimization 
\begin{align}
&\max_{\mu,\sigma^2} \lambda \mu - (1-\lambda) \sigma^2  \label{obj2}\\
&\textrm{s.t.} \;\;\; (\mu, \sigma^2) \in \mathcal{H} \nonumber,
\end{align} 
has the same optimum $(\mu^*, \sigma^{*2})$ as optimization~\eqref{simple_skp}. 
\end{lemma}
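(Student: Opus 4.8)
The plan is to prove the chain
\[
p^*\ \longleftrightarrow\ \rho^*=\min_{\bold{x}\in\mathcal{C}} g(\bold{x})=\min_{(\mu,\sigma^2)\in\mathcal{H}}g(\mu,\sigma^2),\qquad g(\mu,\sigma^2):=\frac{T-\mu}{\sqrt{\sigma^2}},
\]
and then to peel off a supporting line of $\mathcal{H}$ at the optimum to get the linear surrogate \eqref{obj2}. The first link is granted by \eqref{skp3}: since by \eqref{skp2} the group response is Gaussian, $P\!\left(\sum_k r_kx_k\ge T\right)=\Phi\!\big((\boldsymbol{\mu}^T\bold{x}-T)/\sqrt{\bold{x}^T\bold{\Sigma}\bold{x}}\big)=\Phi(-g(\bold{x}))$, which is maximized precisely when $g$ is minimized over the integer set $\mathcal{C}=\{\bold{x}\in\{0,1\}^K:\bold{c}^T\bold{x}\le C\}$.

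Next I would shrink $\mathcal{C}$ to $\mathcal{H}_X$. Note $\mathcal{H}_X=\mathcal{C}\cap\{\boldsymbol{\mu}^T\bold{x}\ge T\}$, the extra condition $\bold{x}^T\bold{\Sigma}\bold{x}\ge 0$ being vacuous since $\bold{\Sigma}$ is a covariance matrix. Under the hypothesis $\mu^*\ge T$ an optimal $\bold{x}^*$ has $T-\boldsymbol{\mu}^T\bold{x}^*\le 0$, hence $g(\bold{x}^*)\le 0$; any $\bold{x}\in\mathcal{C}$ with $\boldsymbol{\mu}^T\bold{x}<T$ has $g(\bold{x})>0$ and is therefore suboptimal. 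So $\min_{\mathcal{C}}g=\min_{\mathcal{H}_X}g=\rho^*$ and $\bold{x}^*\in\mathcal{H}_X$. Because $g(\bold{x})$ depends on $\bold{x}$ only through $(\mu,\sigma^2)=(\boldsymbol{\mu}^T\bold{x},\bold{x}^T\bold{\Sigma}\bold{x})$, this equals $\min_{(\mu,\sigma^2)\in S}g$ where $S$ is the finite set of such pairs over $\bold{x}\in\mathcal{H}_X$ and $\mathcal{H}=\mathrm{Conv}(S)$.

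The crux is $\min_{\mathrm{Conv}(S)}g=\min_S g$. The inequality $\le$ is immediate; for $\ge$ I would show $g$ is quasiconcave on the convex region $D=\{\mu\ge T,\ \sigma^2>0\}$, which contains $\mathcal{H}$ since every point of $S$ has $\mu\ge T$ (and $\sigma^2>0$, as $\bold{\Sigma}\succ 0$ and feasible $\bold{x}\neq 0$; the degenerate singular case is discarded separately). Indeed, for $c<0$ one has $\{g\ge c\}\cap D=\{\mu\ge T,\ \sigma^2\ge (\mu-T)^2/c^2\}$, the region above the convex parabola $\sigma^2=(\mu-T)^2/c^2$, hence convex; for $c=0$ it is $\{\mu=T\}$, and for $c>0$ it is empty. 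Quasiconcavity then gives, by an easy induction on the length of a convex combination, $g\!\left(\sum_i\lambda_i s_i\right)\ge\min_i g(s_i)\ge\min_S g$ for every point of $\mathrm{Conv}(S)$, so $\min_{\mathcal{H}}g=\min_S g=\rho^*$, the value of \eqref{simple_skp}; and the minimum over $\mathcal{H}$ is attained at some $(\mu^*,\sigma^{*2})\in S$, i.e.\ $(\mu^*,\sigma^{*2})=(\boldsymbol{\mu}^T\bold{x}^*,\bold{x}^{*T}\bold{\Sigma}\bold{x}^*)$ with $\bold{x}^*\in\mathcal{H}_X\subseteq\mathcal{C}$, which is the claimed map. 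I expect this quasiconcavity step to be the main obstacle: $g$ is neither convex nor concave, and it fails quasiconcavity globally — it becomes quasiconcave only after restricting to $\{\mu\ge T\}$, so the hypothesis $\mu^*\ge T$ is doing essential work here rather than bookkeeping.

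Finally, for \eqref{obj2}, assume $\rho^*<0$ (the case $\rho^*=0$, i.e.\ $\mu^*=T$, is handled by $\lambda=0$). The convex set $L=\{g\ge\rho^*\}\cap D$ contains $\mathcal{H}$, and $(\mu^*,\sigma^{*2})$ lies on its boundary, the parabola $\sigma^2=(\mu-T)^2/(\rho^*)^2$. Take the tangent line to this parabola at $(\mu^*,\sigma^{*2})$, of slope $s:=2(\mu^*-T)/(\rho^*)^2\ge 0$; since the parabola is convex it lies above this line, so the line supports $L$, hence $\mathcal{H}$, at $(\mu^*,\sigma^{*2})$. Thus the functional $s\,\mu-\sigma^2$ is maximized over $\mathcal{H}$ at $(\mu^*,\sigma^{*2})$; rescaling by the positive factor $1+s$ puts it in the form $\lambda\mu-(1-\lambda)\sigma^2$ with $\lambda=s/(1+s)\in[0,1)$, so \eqref{obj2} and \eqref{simple_skp} share the optimizer $(\mu^*,\sigma^{*2})$, completing the proof.
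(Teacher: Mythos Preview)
Your argument is correct and follows the same skeleton as the paper's proof---quasiconcavity of $g(\mu,\sigma^2)=(T-\mu)/\sqrt{\sigma^2}$ on $\{\mu\ge T,\sigma^2>0\}$ forces the minimum over the convex hull $\mathcal{H}$ to be attained at a vertex, and a linear objective then isolates that vertex---but your execution differs in two useful ways. First, the paper imports pseudoconcavity of $\rho$ from \cite{henig1990risk} and then invokes the general extreme-point principle from \cite{bazaraa2013nonlinear}; you instead verify quasiconcavity directly by observing that each superlevel set $\{g\ge c\}$ is the epigraph of the convex parabola $\sigma^2=(\mu-T)^2/c^2$, which is shorter and self-contained. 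Second, for the linear surrogate \eqref{obj2} the paper only argues existence (``extreme points can be found by LP, and the gradient direction tells us the right sign pattern''), whereas you explicitly construct the working $\lambda$: the tangent to the level parabola at $(\mu^*,\sigma^{*2})$ has slope $s=2(\mu^*-T)/(\rho^*)^2$, supports $\mathcal{H}$ there, and yields $\lambda=s/(1+s)$. This buys you a concrete formula for $\lambda$ in terms of the optimum, at the cost of a slightly longer geometric argument; the paper's version is terser but leans on two external references.
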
 
\begin{proof}
Define $\rho(\mu,\sigma^2) =  (T- \mu)/\sqrt{\sigma^2}$. 
Assuming there exists a $\bold{x}$ so that $\mu=\boldsymbol{\mu}^T \bold{x}$ and  $\sigma^2 = \bold{x}^T\bold{\Sigma} \bold{x}$, the function $\rho$ is the same as the equivalent optimization in \eqref{skp3}. The gradient  
$\vec\nabla \rho(\mu,\sigma^2) = (-1/\sqrt{\sigma^2}, -(T-\mu)/(2\sigma^2\sqrt{\sigma^2})$ shows $\rho(\mu,\sigma^2)$ 
monotonic decreasing in $\mu$ and increasing in $\sigma^2$. 
on the set $\{(\mu, \sigma^2): \mu > T, \sigma^2 > 0\}$.

By $Lemma\;1$ in \cite{henig1990risk}, $\rho(\mu,\sigma^2)$ is pseudoconcave on the set. 
Based on the fact that a quasiconcave function over a convex set attains its minimum in an extreme point of the constraint set \cite{bazaraa2013nonlinear}, 
the extreme points of $ \mathcal{H}$ 
should be found and tested to see which extreme point is the optimal solution. The extreme points in a convex set can be found by linear programming \cite{boyd2004convex}. As shown in the gradient $\vec\nabla \rho(\mu,\sigma^2)$ 
, when $\rho^* < 0$ ($\mu^* > T$, so $T$ and $C$ are appropriately selected), the objective function in \eqref{simple_skp} is minimized with larger $\mu$ and smaller $\sigma^2$. Thus, the extreme points which are feasible to be the optimal solution can be obtained by solving the problem below.
\begin{align*} 
\max_{\mu,\sigma^2} \lambda \mu - (1-\lambda) \sigma^2, (0 \leq \lambda \leq 1)
\end{align*} 

\end{proof} 

To find the extreme points in $\mathcal{H}$, \eqref{obj2} 
is solved multiple times for various $\lambda$. When $\lambda$ is given, the calculation can be done by dynamic programming (DP). However, the overall procedure will be determined mainly by selecting a small number of various values of $\lambda$ to find all extreme points. Setting various $\lambda$ is related to nonlinear optimization techniques referenced in \cite{henig1990risk}. According to \cite{henig1986shortest}, the expected complexity of finding the set of various $\lambda$ is $O({_K}C_{N})$ (where $K$ is the number of customers, $N$ is the limit of the number of participating customers), which requires heavy computation cost when $K$ is large.

Alternatively,  \cite{morton1998stochastic} does not try to find extreme points in $\mathcal{H}$, but rather changes the SKP into multiple knapsack problems (KP) on the assumption that all the $\sigma^2$s are integers and all users are independent 
 ($\bold{\Sigma}$ is diagonal). The method is quite computationally intensive and requires solving $O(N\lfloor \max(\sigma^2)\rfloor)$ integer LP. 
 
A more straightforward approach is to extend the concept of a greedy approximation algorithm utilized to solve specific instances of the knapsack problem  \cite{dantzig1957discrete, dean2004approximating}. The proposed algorithm is described in Appendix~\ref{sec:appb} as Algorithm~\ref{alga3} and has complexity $O(K\log K)$.  It utilizes the  per customer risk-reward ratio $\mu_k/\sigma_k$ to sort and rank customers who are offering sufficient benefit. This constraint improves the performance of the algorithm as explained in the Appendix.

\subsection{Stochastic Knapsack problem solving}\label{optsol}

We propose an efficient heuristic algorithm (Algorithm \ref{alga2}) to solve the SKP by finding the extreme points in $\mathcal{H}$. 
By the monotonicity property shown in the proof of Lemma~\ref{lem-skp}, the maximum should be obtained from one of the extreme points found by solving \eqref{obj2} with various $\lambda$. 
Every extreme point of $\mathcal{H}$ has a corresponding point $\bold{x} \in \mathcal{H}_X$ which can be obtained by solving
\begin{align} \label{subprob}
\max_{\bold{x} \in \mathcal{H}_X} \lambda' \boldsymbol{\mu}^T \bold{x} - \bold{x}^T\bold{\Sigma x}, \;(\lambda' = \frac{\lambda}{1-\lambda}\ge0).
\end{align} 

If we think a two-dimension scatter plot of $\mathcal{H}$ (x-axis: $\mu$, y-axis: $\sigma^2$), $\lambda'$ corresponds to the slope. Then, finding an extreme point by solving \eqref{subprob} corresponds to finding the point on the slope with minimum intercept in $\mathcal{H}$. 
Depending on the slope, $\lambda'$, different extreme points will be obtained. Thus, the heuristic we propose here is to find all extreme points by solving \eqref{subprob} 
 constant times, $M$, i.e., increasing the slope from $0$ to $\pi/2$ by $\pi/2M$. 

\begin{algorithm}
\caption{Algorithm to solve the SKP in \eqref{skp}}\label{alga2}
\begin{algorithmic}
\REQUIRE $\bold{\mu}$ and $\bold{\Sigma}$ from response modeling.
\STATE Set a integer constant, $M$ (= the number of iterations).
\FOR {$i$ from 0 to $M$}
\STATE $\lambda_i' = \tan\left(\frac{i\pi}{2M}\right)$ (= equally increasing slope angle).
\STATE \; Solve the problem below and save $\bold{x_i}$ : 
\begin{align}
\textrm{(If $\rho^*\le 0$)}\; \bold{x_i} = \arg &\max_{\bold{x}} \lambda_i' \boldsymbol{\mu}^T \bold{x} - \bold{x}^T\bold{\Sigma x},\label{findextreme}\\
\textrm{(If $\rho^*> 0$)}\; \bold{x_i} = \arg &\max_{\bold{x}} \lambda_i' \boldsymbol{\mu}^T \bold{x} + \bold{x}^T\bold{\Sigma x},\label{findextreme2}\\
&s.t.\;\;\; \sum_{k=1}^{K} c_{k}x_{k} \le C,\nonumber\\
&\;\;\;\;\;\;\;\;\;x_{k} \in \{0,1\} \;\;  \forall k.\nonumber
\end{align}
\ENDFOR
\begin{equation}
\text{Return} \;\bold{\bar{x}} = \arg \min_{\bold{x}\in\{\bold{x_0},...,\bold{x_M}\}} \frac{T-\boldsymbol{\mu}^T \bold{x}}{\sqrt{\bold{x}^T\bold{\Sigma x}}}.
\hspace{1in}\nonumber
\end{equation}
\end{algorithmic}
\end{algorithm}


We bound the ratio between the optimal cost obtained by Algorithm~\ref{alga2} and the true optimal cost $\rho^*$ as defined in \eqref{skp3}: 

\begin{proposition}\label{prop1}
Let $\mu'_i = \boldsymbol{\mu}^T \bold{x_i}, \sigma'^2_i = \bold{x_i}^T\bold{\Sigma x_i}$. When $\rho^*$ in \eqref{skp3} is less than zero, Algorithm \ref{alga2} has the approximation bound below, which is only depending on $\sigma'_i$.
\begin{align}
\frac{T-\boldsymbol{\mu}^T \bold{\bar{x}}}{\sqrt{\bold{\bar{x}}^T\bold{\Sigma \bar{x}}}} < \rho^* \min_i \frac{\sigma'_{i-1}}{\sigma'_i}.
\end{align}
\end{proposition}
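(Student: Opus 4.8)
My plan rests on the fact that Algorithm~\ref{alga2}, via subproblem~\eqref{findextreme}, maximizes $\lambda_i'\boldsymbol{\mu}^T\bold{x}-\bold{x}^T\bold{\Sigma}\bold{x}$ over exactly the feasible set $\{\bold{x}\in\{0,1\}^K:\bold{c}^T\bold{x}\le C\}$ that defines $\rho^*$ in \eqref{skp3}, so I can argue directly from the definition of $\rho^*$. Write $\mu'_i=\boldsymbol{\mu}^T\bold{x_i}$, $\sigma'_i=\sqrt{\bold{x_i}^T\bold{\Sigma}\bold{x_i}}$, $\mu^*=\boldsymbol{\mu}^T\bold{x}^*$, $\sigma^{*2}=(\bold{x}^*)^T\bold{\Sigma}\bold{x}^*$, $\sigma^*=\sqrt{\sigma^{*2}}$, and recall from the proof of Lemma~\ref{lem-skp} that $\rho(\mu,\sigma^2)=(T-\mu)/\sqrt{\sigma^2}$, so $\rho^*=\rho(\mu^*,\sigma^{*2})$ and $\rho^*<0$ forces $\mu^*>T$. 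The algorithm scans slopes $\lambda_0'=0<\lambda_1'<\dots<\lambda_M'=\infty$ (with $\lambda_M'$ meaning ``maximize $\boldsymbol{\mu}^T\bold{x}$'') and returns the $\bold{\bar{x}}$ minimizing $\rho$ among $\bold{x_0},\dots,\bold{x_M}$. The idea is to locate $\sigma^{*2}$ inside the monotone variance sequence produced by the scan and then squeeze $\rho$; concretely, it suffices to exhibit one index $j\in\{1,\dots,M\}$ with $\rho(\mu'_j,{\sigma'_j}^2)\le\rho^*\min_i(\sigma'_{i-1}/\sigma'_i)$, since the return step then bounds $\frac{T-\boldsymbol{\mu}^T\bold{\bar{x}}}{\sqrt{\bold{\bar{x}}^T\bold{\Sigma}\bold{\bar{x}}}}$ by the same quantity.

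First I would record two facts. \emph{Monotonicity:} for $i<M$, pairing the optimality inequalities of the subproblems at slopes $\lambda_{i-1}'$ and $\lambda_i'$ gives $\lambda_{i-1}'(\mu'_i-\mu'_{i-1})\le{\sigma'_i}^2-{\sigma'_{i-1}}^2\le\lambda_i'(\mu'_i-\mu'_{i-1})$, which (since $\lambda_{i-1}'<\lambda_i'$) forces $\mu'_i\ge\mu'_{i-1}$ and then ${\sigma'_i}^2\ge{\sigma'_{i-1}}^2$; the step $i=M$ is the same once one uses $\mu'_M\ge\mu'_{M-1}$ because $\bold{x_M}$ maximizes $\boldsymbol{\mu}^T\bold{x}$. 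Hence $\mu'_i$ and $\sigma'_i$ are nondecreasing in $i$, so $\sigma'_0\le\sigma'_M$ and $\min_i(\sigma'_{i-1}/\sigma'_i)\le1$. \emph{Feasibility of $\bold{x}^*$:} $\bold{x}^*$ satisfies the budget and binarity constraints, hence is feasible in every subproblem, so optimality of $\bold{x_i}$ yields $\lambda_i'\mu'_i-{\sigma'_i}^2\ge\lambda_i'\mu^*-\sigma^{*2}$; taking $i=0$ gives ${\sigma'_0}^2\le\sigma^{*2}$, and $\mu'_M\ge\mu^*$ since $\bold{x_M}$ maximizes $\boldsymbol{\mu}^T\bold{x}$ over a set containing $\bold{x}^*$.

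Next I would take $j$ to be the smallest index in $\{1,\dots,M\}$ with ${\sigma'_j}^2\ge\sigma^{*2}$, and $j=M$ if there is none. In both cases $\sigma'_{j-1}\le\sigma^*$ (from monotonicity and ${\sigma'_0}^2\le\sigma^{*2}$; or, in the fallback, $\sigma'_{j-1}\le\sigma'_M<\sigma^*$) and $\mu'_j\ge\mu^*$ (rearranging the slope-$\lambda_j'$ inequality to $\lambda_j'(\mu'_j-\mu^*)\ge{\sigma'_j}^2-\sigma^{*2}\ge0$; or, in the fallback, directly from $\mu'_M\ge\mu^*$). Since $\mu'_j\ge\mu^*>T$ and $\sigma'_j>0$, the argument then closes with
\begin{align*}
\rho(\mu'_j,{\sigma'_j}^2)&=\frac{T-\mu'_j}{\sigma'_j}\;\le\;\frac{T-\mu^*}{\sigma'_j}\;=\;\rho^*\frac{\sigma^*}{\sigma'_j}\\
&\le\;\rho^*\frac{\sigma'_{j-1}}{\sigma'_j}\;\le\;\rho^*\min_i\frac{\sigma'_{i-1}}{\sigma'_i},
\end{align*}
where the first inequality uses $T-\mu'_j\le T-\mu^*<0$, the equality uses $\rho^*=(T-\mu^*)/\sigma^*$, and the last two use $\sigma'_{j-1}\le\sigma^*$ and $\sigma'_{j-1}/\sigma'_j\ge\min_i(\sigma'_{i-1}/\sigma'_i)$ together with $\rho^*<0$, which reverses the inequality at each step.

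I expect the real work to be the combinatorial bookkeeping rather than any isolated estimate: making the monotonicity of $(\mu'_i,\sigma'_i)$ precise, and above all handling that $\sigma^{*2}$ need not lie inside $[{\sigma'_0}^2,{\sigma'_M}^2]$ --- the case $\sigma^{*2}>{\sigma'_M}^2$ is exactly what the fallback $j=M$ absorbs, which is also why $\bold{x_M}$ must be a genuine $\boldsymbol{\mu}^T\bold{x}$-maximizer. One should note too that the subproblems \eqref{findextreme} drop the availability constraint $\boldsymbol{\mu}^T\bold{x}\ge T$ appearing in the set $\mathcal{H}$ of Lemma~\ref{lem-skp}; this is harmless here because only the feasibility of $\bold{x}^*$ (not its being an extreme point) is used, and $\bold{x}^*$ is feasible regardless. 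Finally, the strict inequality is recovered by tracking which step of the chain is strict --- generically $\sigma'_{j-1}<\sigma^*$ or $\mu'_j>\mu^*$, and in the degenerate case where Algorithm~\ref{alga2} already recovers the optimum, from $\min_i(\sigma'_{i-1}/\sigma'_i)<1$ --- which holds whenever the variances ${\sigma'_i}^2$ are positive and not all equal, the regime in which Proposition~\ref{prop1} is intended.
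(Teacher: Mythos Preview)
Your proposal is correct and follows essentially the same approach as the paper: bracket $(\mu^*,\sigma^{*2})$ between two consecutive extreme points $(\mu'_{i-1},{\sigma'_{i-1}}^2)$ and $(\mu'_i,{\sigma'_i}^2)$ produced by the slope scan, then run the identical inequality chain $\frac{T-\mu'_i}{\sigma'_i}\le\frac{T-\mu^*}{\sigma'_i}=\rho^*\frac{\sigma^*}{\sigma'_i}\le\rho^*\frac{\sigma'_{i-1}}{\sigma'_i}\le\rho^*\min_j\frac{\sigma'_{j-1}}{\sigma'_j}$. The only difference is that the paper asserts the bracketing via the geometric picture in Fig.~\ref{optimalpoint}, whereas you derive it (and the monotonicity of $(\mu'_i,\sigma'_i)$) directly from the optimality inequalities of the subproblems~\eqref{findextreme}, and you additionally treat the boundary case $\sigma^{*2}>{\sigma'_M}^2$ explicitly --- a case that is in fact vacuous under $\rho^*<0$, since it would force $\rho(\mu'_M,{\sigma'_M}^2)<\rho^*$.
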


We provide the proof for \textit{Proposition \ref{prop1}} in Appendix~\ref{approxproof}. 
Briefly, the bound depends on the given data and the number of iterations, $M$, which decide $\sigma'_i$. For example, we provide the relation between $M$ and the approximation bound at Zone 13 in section \ref{subsec:selection} by Fig. \ref{mvsapprox}. 


When $\rho^*$ in \eqref{skp3} is larger than zero, under a given $T$ and $N$, the optimal solution does not have to be one of the extreme points \cite{henig1990risk}. However, we assume that the optimal solution stays at one of the extreme points and find the extreme points by solving \eqref{findextreme2} 
instead of \eqref{findextreme} in Algorithm \ref{alga2}. This is because the optimization direction is increasing $\boldsymbol{\mu}^T \bold{x}$ and $\bold{x}^T\bold{\Sigma x}$ as much as possible when $\rho^*$ is larger than zero.

When $\bold{\Sigma}$ is not diagonal, \eqref{findextreme} should be solved by quadratic programming (QP) after relaxing $0\le x_k\le1$. However, if we assume that the responses are independent so that $\bold{\Sigma}$ is diagonal ($\boldsymbol{\sigma}^2 = (\Sigma_{11}, ..., \Sigma_{KK}))$, solving the problem \eqref{findextreme} becomes a linear programming (LP) problem as shown in \eqref{indepcase}. 
\begin{align}\label{indepcase}
\lambda'_i \boldsymbol{\mu}^T \bold{x} - \bold{x}^T\bold{\Sigma x} &= (\lambda'_i \boldsymbol{\mu} - \boldsymbol{\sigma}^2)^T\bold{x}.
\end{align}
Especially, when the cost for each customer, $c_k$, is the same, the problem simplifies to selecting the highest (at most) $N$ entries from the $\lambda'_i \boldsymbol{\mu} - \boldsymbol{\sigma}^2$ vector, which is the same as sorting the $K$ entries, $O(K log K)$, in a computational perspective. 

To summarize, with the assumption of having independent responses and the same targeting costs, our customer selection procedure guarantees a very close to optimal solution, with a computational complexity ($O(K log K)$) equivalent to that of sorting $K$ entries in a vector. 
This is the most significant benefit of our heuristic algorithm, which enables customer selection even with a very large number of customers.

\section{Experiments on data}\label{exp}
\subsection{Description of data}
The \emph{anonymized} smart meter data used in this paper is provided by Pacific Gas and Electric Company (PG\&E). The data contains the electricity consumption of residential PG\&E customers at 1 hour intervals. There are 218,090 smart meters, and the total number of 24 hour load profiles is 66,434,179. The data corresponds to 520 different zip codes, and covers climate zones 1, 2, 3, 4, 11, 12, 13 and 16 according to the California Energy Commission (CEC) climate zone definition. The targeting methodology is tested by focusing on a cool climate zone (Zone 3: 32,339 customers) and a hot climate zone (Zone 13: 25,954 customers) during the summer season May to July 2011. Zone 3 is a coastal area and Zone 13 an inland area. Weather data corresponding to the same period is obtained from Weather Underground for each zip code in these two zones.

\subsection{Consumption model fitting result}
In this section, we provide analysis on the consumption model fitting result. First, the model selection result (F-test) is shown in order to check 
which model (between \eqref{intermodel} and \eqref{basic}) is selected for the customers depending on their climate zones. 
Moreover, to check whether the model explains the consumption effectively, we provide $R^2$ value distribution for each climate zone. As noted before, the models are aimed to capture the energy consumption of AC systems during the summer season. 25,954 household data in Zone 13 were used to fit the model, and, for comparison purposes, another 32,339 household data in Zone 3 were used.
\begin{figure}[htbp]
\includegraphics[width=3.3in,height=2.1in]{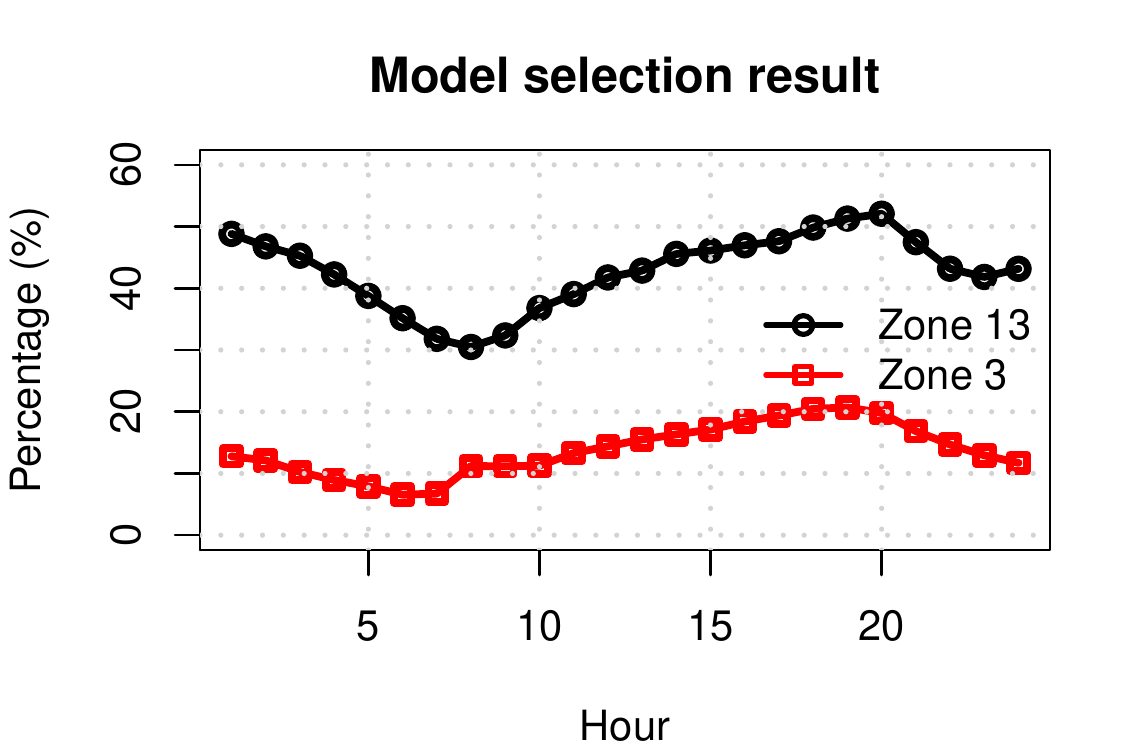}
\caption{Hourly model selection result by F-test \\(the percentage of that model \eqref{intermodel} is selected)
}\label{hourlyftest}
\end{figure}

Note that the degree of freedom of model \eqref{intermodel} is overrated because we constrained on the reference temperature to be an integer value between 68 and 86. Therefore, the F-test result will show a conservative view of the performance of the model \eqref{intermodel}. Fig. \ref{hourlyftest} shows the hourly model selection result for both climate zones. It shows the percentage of times the F-test selects model \eqref{intermodel} 
for each hour. As shown in the figure, 
model \eqref{intermodel} is selected 
much more in Zone 13 (hot) than Zone 3 (cool), which means model \eqref{intermodel} does not perform significantly better than \eqref{basic} in the cool climate zone. Also, we note that model \eqref{intermodel} fits better in higher consumption hours (3-9PM) for both climate zones.

Considering that the system peak hours of consumption for a typical utility occur between 4PM and 6PM during the summer season, we assume that these are the hours when DR is most required. Thus, we provide the result on those hours. 
Fig. \ref{r2both} shows the $R^2$ distribution during 5PM to 6PM. The other hours show almost similar results with that of 5PM to 6PM. In the hot area, the model explains about 42\% of variances on average, while it only explains 10\% of variances in the cool area. Considering that the only independent variable is temperature, temperature should be a very important factor in the summer season in the hot climate area, while it is not important factor of energy consumption in the cool climate area. The result is expected since cool areas tend to utilize air conditioning less frequently or even avoid it all together. 
\begin{figure}[htbp]
\includegraphics[width=3.5in,height=1.2in]{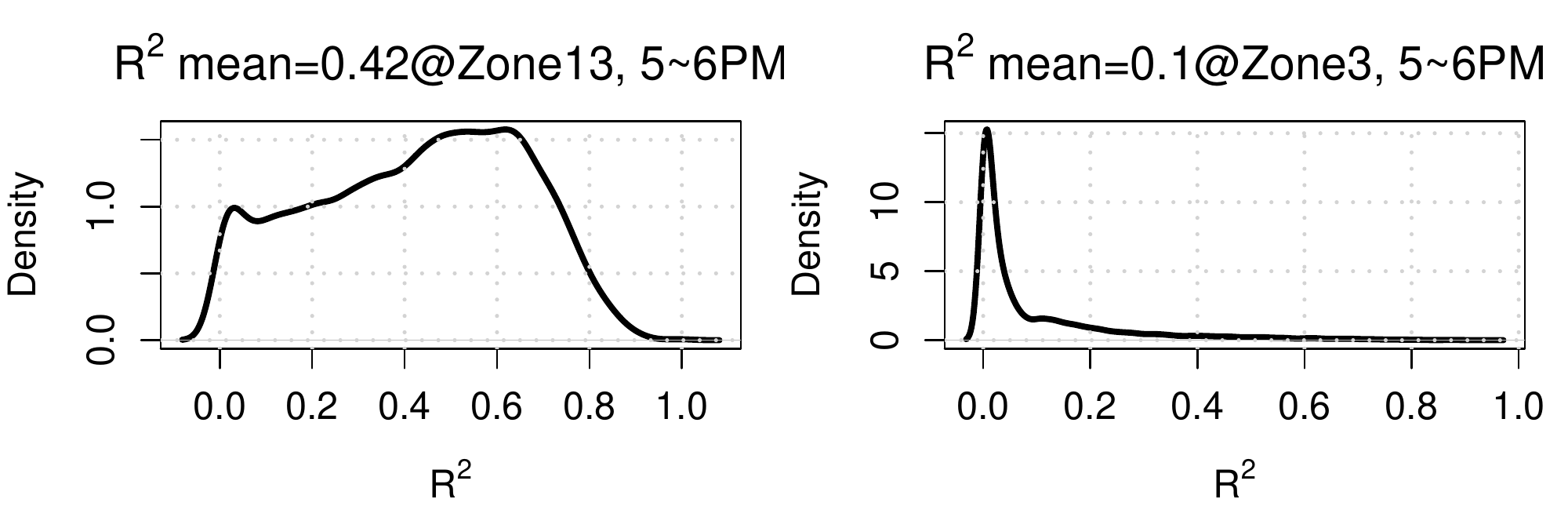}
\caption{$R^2$ distribution at both climate zones}\label{r2both}
\end{figure}

Fig. \ref{musigma} shows the smoothed scatter plots of $\mu_k$ (mean response) and $\sigma_k$ (response standard deviation) in \eqref{skp2} in both climate zones when 
$\Delta Tr_k$ is $3{}^\circ F$. Most of customers in Zone 13 have positive mean responses (coming from positive temperature sensitivity) with relatively small standard deviations, while most of the customers in Zone 3 have close to zero response, and only small fraction of customers have positive mean responses with relatively widespread standard deviations. This suggests that the explained DR program would not be as effective in the cool area as in the hot area. This is why we can achieve the same energy saving with a smaller $N$ in the hot area, as is shown in the following section \ref{subsec:selection}.
\begin{figure}[htbp]
\includegraphics[width=3.5in,height=2.1in]{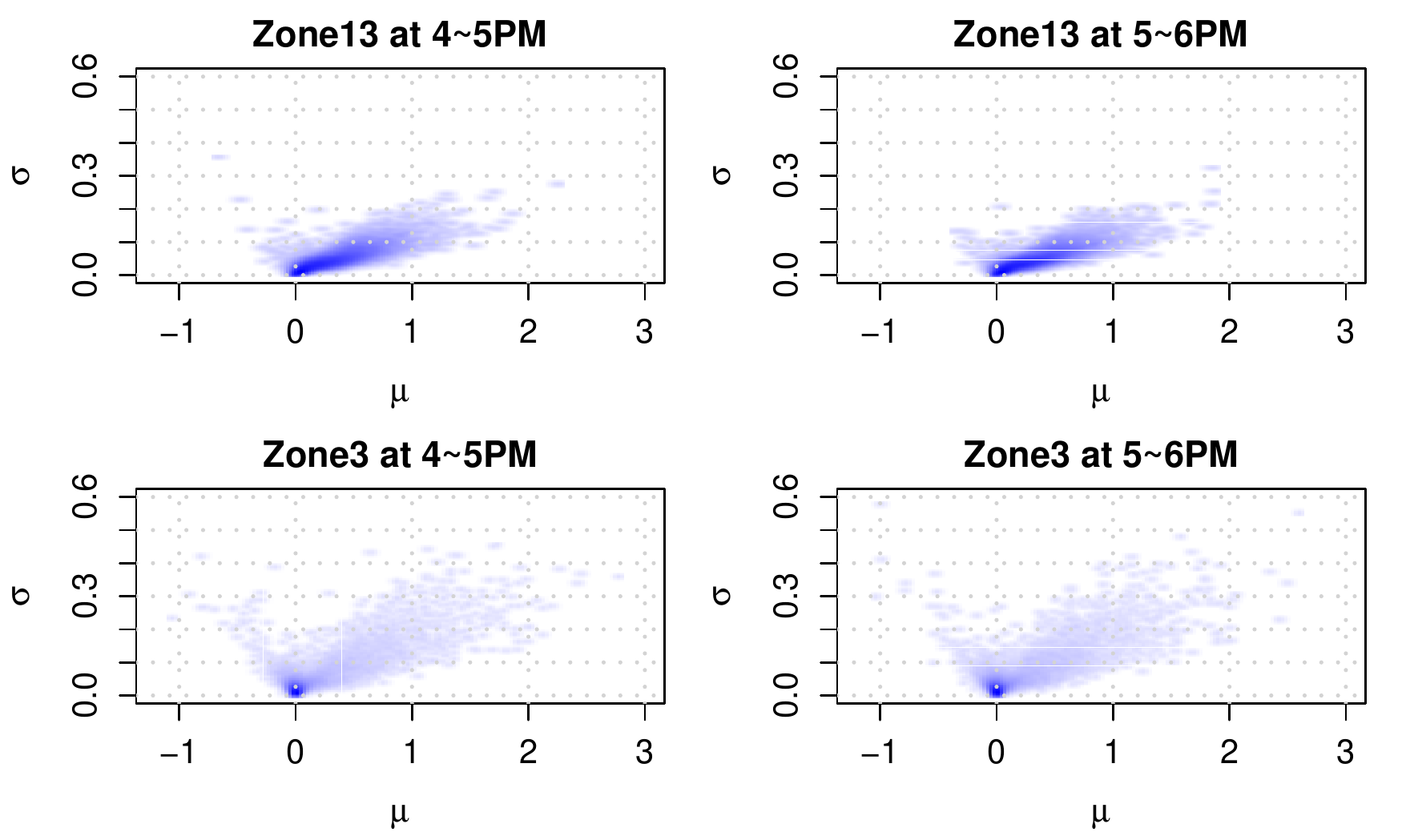}
\caption{$\mu$ and $\sigma$ scatter plot in both climate zones}\label{musigma}
\end{figure}


\subsection{Targeting result analysis}\label{subsec:selection}
Section~\ref{probstate} developed a SKP for optimal customer selection, and two different algorithms are provided to solve this combinatorial problem in Section~\ref{optsol}: an efficient heuristic (Algorithm \ref{alga2}) and a greedy algorithm (Algorithm \ref{alga3}). In this section, we show the customer selection results by solving the SKP \eqref{skp} with the assumptions that $c_k$ is same for all $k$ and $\bold{\Sigma}$ is diagonal.

First, we provide the tradeoff curve between DR availability and DR reliability with fixed $N$, which is the most important statistics for a utility manager with a limited budget to run a DR program. Then, we present the relation between varying $T$ (availability) and the minimum $N$ that achieves 95\% probability. Through this plot, we can provide how much cost is required to achieve certain energy saving. Also, we show the relation between varying $N$ and the maximum probability (reliability) in \eqref{skp} given $T$, which is corresponding to the tradeoff between DR reliability and the cost with a given $T$.  Last, we provide the plot of $M$ and the approximation bound.

\begin{figure}[htbp]
\includegraphics[width=3.5in,height=2.5in]{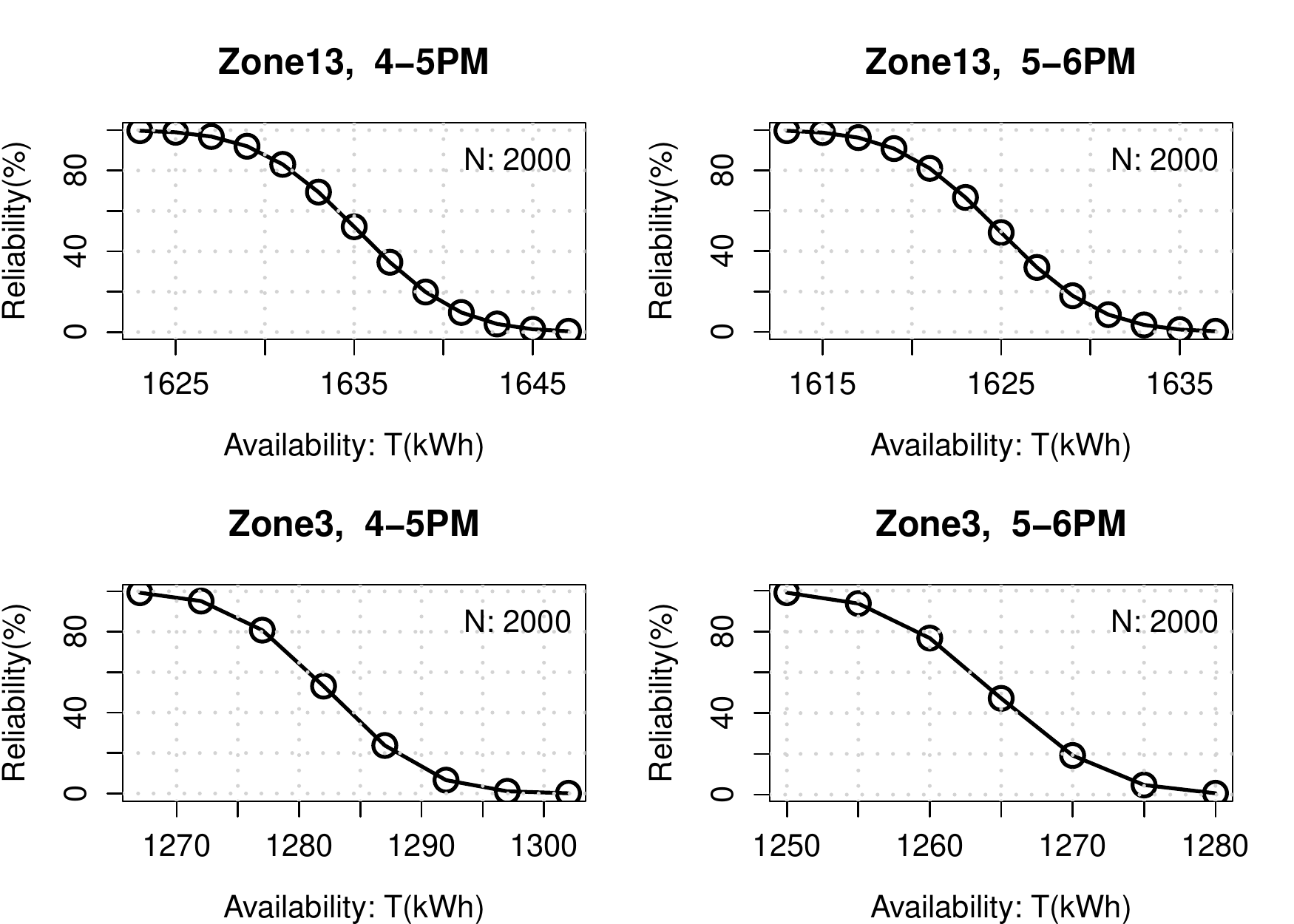}
\caption{Tradeoff curve between DR availability and reliability}\label{relava_both}
\end{figure}
Fig. \ref{relava_both} shows the DR availability-reliability tradeoff curve within a certain limit of the number of participating customers. The interpretation example can be that, if a utility wants to save 1625kWh during 5-6PM at Zone 13, we can not guarantee the saving more than 50\% with even with the best 2000 customers. 
A utility manager can generate this plot with setting $N$ depending on their budget and decide how much energy they will target to save with a certain reliability sacrifice.

\begin{figure}[htbp]
\includegraphics[width=3.5in,height=2.5in]{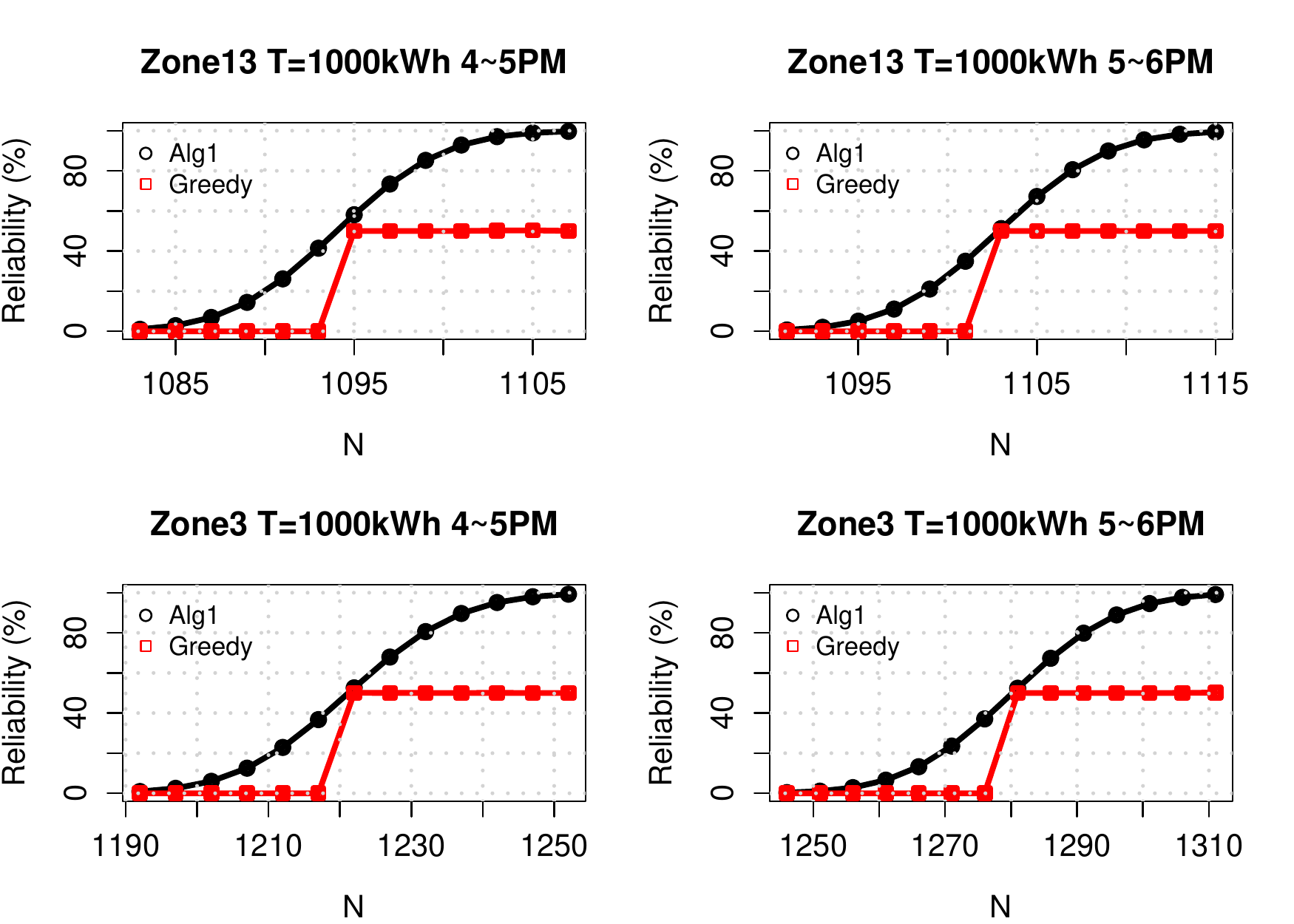}
\caption{Maximum probability (reliability) and N for a DR program with $\Delta Tr_k$ = $3{}^\circ F$ }\label{pmaxnbothindep}
\end{figure}

Fig. \ref{pmaxnbothindep} 
shows the relation between maximum probability and $N$ given $T$ in \eqref{skp}. We assume that DR events can happen for each hour during 4-6PM, $\Delta Tr_k$ is $3{}^\circ F$ and $T$ is 1000kWh. According to the plots, the heuristic algorithm always achieves equal or higher probability than the greedy algorithm. As designed, the greedy algorithm guarantees more than 50\% probability when the optimal solution can achieve more than 50\%. One thing to note is that the reliability changes from 0\% to 100\% within 30 customers in the hot climate zone though about 1100 customers are required to achieve 1000kWh of energy saving. This implies that an inappropriate number of enrollments can cause a program to fail, reinforcing the need for the analytic targeting mechanism. In contrast, in the cool climate zone, it takes more than 60 customers to reach a reliability from 0\% to 100\%. As shown in Fig. \ref{musigma}, the reason is that a small number of the customers with positive mean responses in Zone 3 have relatively large standard deviations compared to the customers in Zone 13. 

\begin{figure}[htbp]
\includegraphics[width=3.5in,height=2.5in]{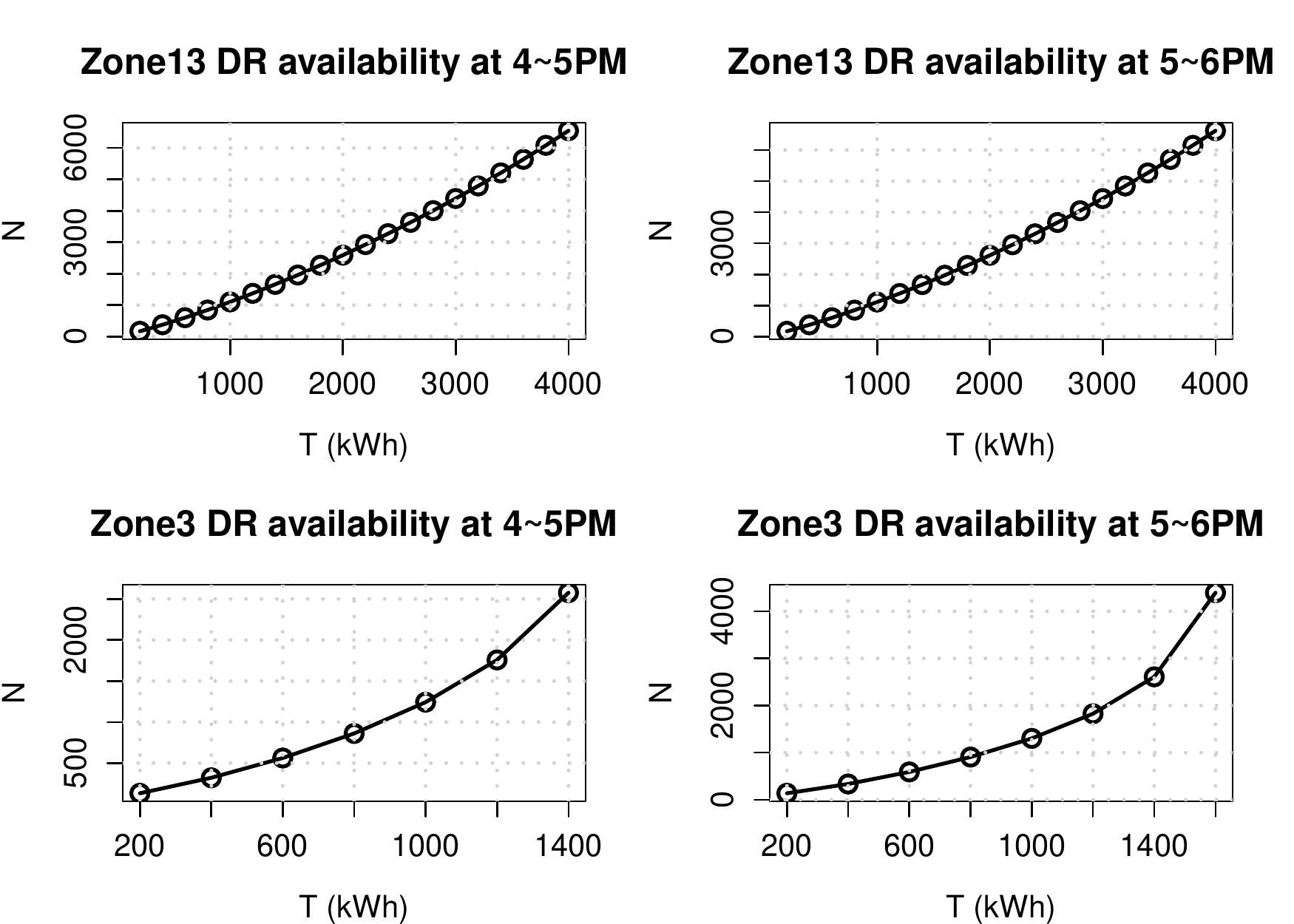}
\caption{$T$ and the minimum $N$ to achieve $T$ with more than 95\% probability}\label{tnmin_both}
\end{figure}
Fig. \ref{tnmin_both} 
shows the relation between varying $T$ (availability) and the minimum $N$ required to achieve the corresponding $T$ with guaranteeing the reliability ($>$95\%). We varied T from 200kWh to 4000kWh for both climate zones. Note that energy savings of 4000kWh or more are achievable with high probability in Zone 13 while it is not possible to achieve even more than 1600kWh in Zone 3. This fact supports the reasonable conjecture again that targeting customers in hot climate zones is more effective in achieving energy savings than targeting customers in cool climate zones. Also, note that the relation between $T$ and $N$ in the figures can be approximated by quadratic equations.

\begin{figure}[htbp]
\includegraphics[width=3.5in,height=2in]{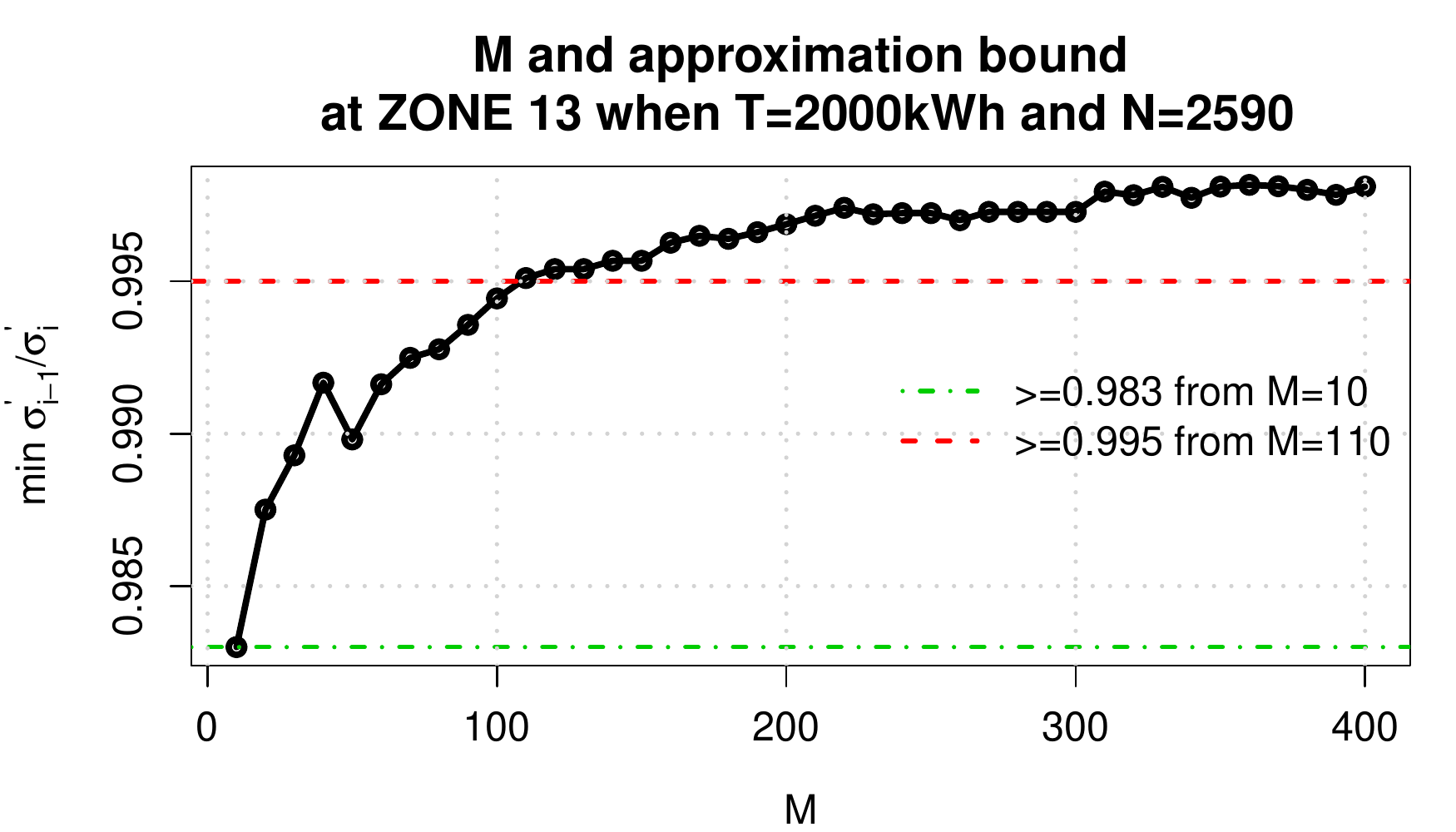}
\caption{M and approximation bound
}\label{mvsapprox}
\end{figure}
Finally, Fig. \ref{mvsapprox} 
shows the approximation bound 
of Algorithm \ref{alga2}. For $T$, we use 2000kWh, and 
for $N$, the minimum $N$ that achieves more than 95\% probability is selected for Zone 13. 
In Zone 13, from $M$=10, the approximation bound 
is 0.983, which means only 11 sortings of $K$ values are needed to achieve the almost optimal solution of \eqref{skp}. 

\section{Conclusions}\label{con}
In this paper, we investigated an efficient customer selection process for DR programs based on a large amount of hourly consumption data. We ensured that the proposed methodology scales to large data sets and is composed of a simple linear response modeling and a fast heuristic selection algorithm to solve the SKP. Moreover, 
we proved the approximation bound 
of the heuristic algorithm solution, which is an almost optimal solution in the hot climate zone. 


As an example DR program showing the result of our approach, we suggested a DR program controlling the temperature set point during the summer season. From the experimental results, we found that there are many customers, even in cool climate zones, who are very sensitive to the outside temperature and use their AC system actively. Therefore, when the number of recruiting customers, $N$, is small, the energy saving potential can be higher in the cool climate zone. The probability of achieving the targeted energy savings changes rapidly over a small range of $N$ in hot climate zones, which means it is very important to select the right customers to achieve the targeted energy savings with minimum enrollment and cost for the utility side.

The proposed method can be extended in many different ways. For example, the proposed heuristic algorithm does not have to be confined to a DR program targeting. If any application problem can be formulated in the same optimization problem format, the heuristic algorithm can be applied with low computational cost. Also, by changing DR programs and response modeling, more refined and practical research can be done. Additionally, it might be important to include other practical constraints or requirements from DR programs.

\appendices
\section{Gradual greedy approximation} \label{sec:appb}
We propose a very simple greedy algorithm to solve \eqref{skp}. Excluding the design parameter $T$ from \eqref{skp3}, we minimize $\boldsymbol{\mu}^T \bold{x}/\sqrt{{\boldsymbol{\sigma}^2}^T \bold{x}}$ with the assumption that $\bold{\Sigma}$ is diagonal. We sort customers by descending order of $\mu_k/\sigma_k$. 
However, when $T$ is larger than the sum of $\mu_k$ for the first $N$ customers, naive sorting can be far from the optimal solution. For example, when the $\mu_k$s are very small even with a high $\mu_k/\sigma_k$ ratio, it cannot make $\rho$ smaller than zero even though $\rho^*$ is smaller than zero. To prevent this drawback, we can modify the greedy approximation algorithm to the gradual greedy algorithm, Algorithm \ref{alga3}. This greedy algorithm will select one best customer at a time while satisfying the condition in \eqref{greedy}. Therefore, the algorithm guarantees that it achieves at least more than 50$\%$ when $\rho^* < 0$ in \eqref{skp3}. When $\rho^* > 0$, we let the greedy algorithm select the $N$ customers with the biggest $\mu_k$.
\begin{algorithm}
\caption{Gradual Greedy Algorithm}\label{alga3}
\begin{algorithmic}
\REQUIRE $\bold{\mu}$ and $\bold{\sigma^2}$ vectors in \eqref{indepcase}.
\STATE $\bold{x} = 0$; $T_0 = T$.
\FOR {$i$ from 1 to $N$}
\STATE Find $j$ from below and set $x_j$=1 in $\bold{x}$ :
\begin{align}\label{greedy}
j = \arg\max_{\{k| x_k = 0\}} &\frac{\mu_k}{\sigma_k},\nonumber\\
\textrm{(If $\rho^*\le0$)}\;s.t. \;&\mu_k \ge \frac{T_{i-1}}{N+1-i}.
\end{align}
\STATE \;$T_{i} = T_{i-1}-\mu_j$.
\ENDFOR
\end{algorithmic}
\end{algorithm}

\section{Proof for \textit{Proposition \ref{prop1}}}\label{approxproof}
\begin{figure}[htbp]
\includegraphics[width=3.5in,height=2in]{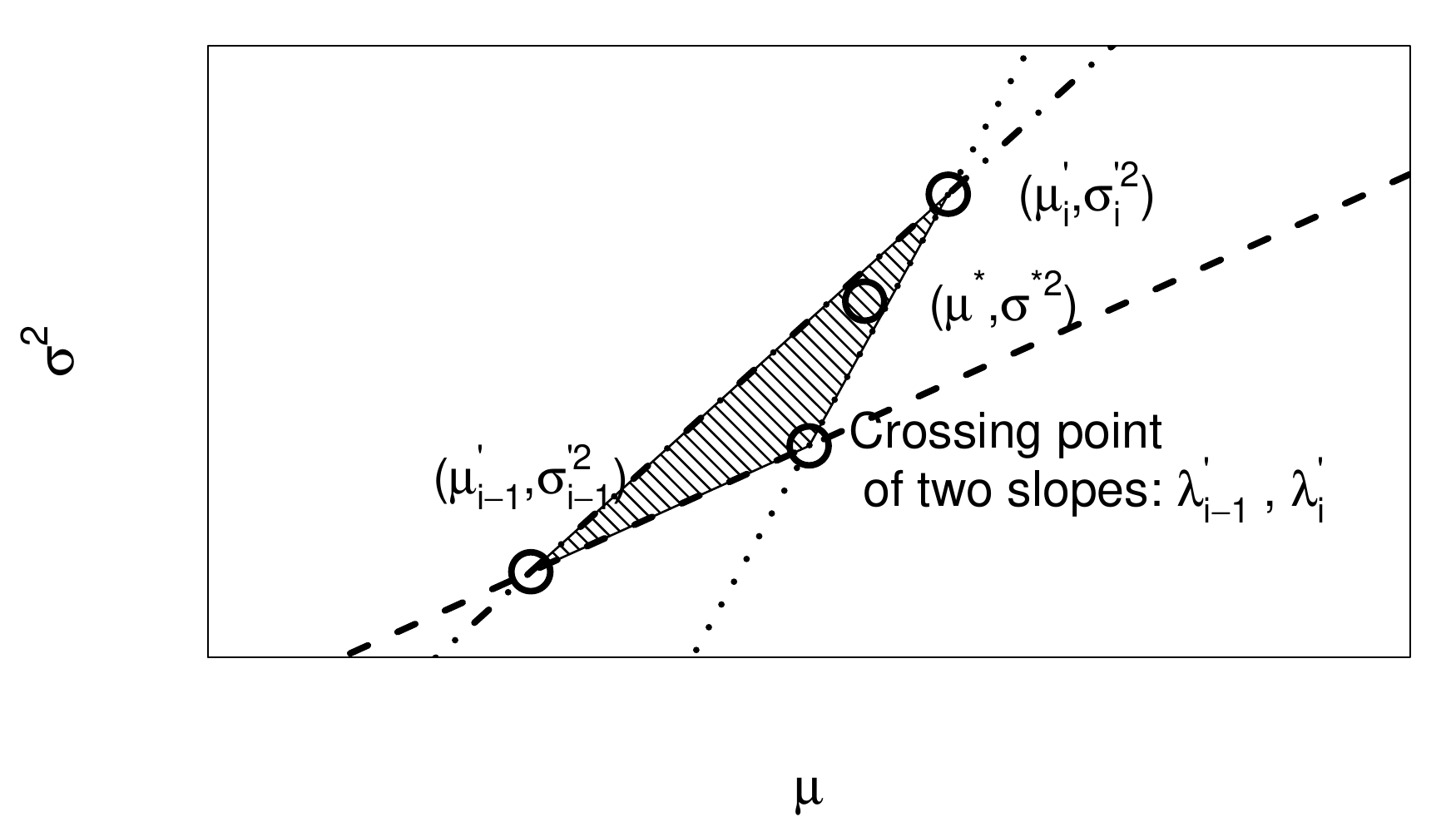}
\caption{When optimal point is not captured in Algoritham \ref{alga2}}\label{optimalpoint}
\end{figure}
\begin{proof}\label{proof2}
Let $\rho_{ALG}$ is the minimum $\rho$ obtained by Algorithm \ref{alga2}. And, we denote the optimal value of $\rho$ by $\rho^*$ in \eqref{skp3}. Briefly, if the optimal point $(\mu^*, \sigma^{*2})$ 
is one of the extreme points found by our heuristic algorithm (Algorithm \ref{alga2}), the algorithm achieves the actual optimal solution. However, as seen in Fig. \ref{optimalpoint}, the optimal point can be in the triangular region defined by the two consecutive slopes ($\lambda'_{i-1}, \lambda'_{i}$) and the two corresponding extreme points. Then, the optimal point satisfies the properties below.
\begin{align}
&\mu'_{i-1} < \mu^* < \mu'_i \textrm{ and } \sigma'^2_{i-1} < \sigma^{*2} < \sigma'^2_i\;,\; \exists i\\
&(\mu'_i = \boldsymbol{\mu}^T \bold{x_i},\; \sigma'^2_i = \bold{x_i}^T\bold{\Sigma x_i},\;\; \textrm{$\bold{x_i}$ is from (\ref{findextreme})}\;).\nonumber
\end{align}
Then, we can derive the inequality as seen in \eqref{approxproof}. 
\begin{align}\label{approxproof}
\rho_{ALG} &= \min_j \frac{T-\mu'_j}{\sigma'_j} \le \frac{T-\mu'_i}{\sigma'_i} < \frac{T-\mu^*}{\sigma'_i}\nonumber\\
&= \frac{T-\mu^*}{\sigma^*} \frac{\sigma^*}{\sigma'_i} < \rho^* \frac{\sigma'_{i-1}}{\sigma'_i}\le\rho^* \min_j \frac{\sigma'_{j-1}}{\sigma'_j}.
\end{align}
\end{proof}


From the proof, the approximation bound is given by the minimum ratio between two consecutive $\sigma^{'}_i$. It would be closer to 1 as $M$ is larger, which means Algorithm \ref{alga2} can achieve a solution 
within a very tight bound from the optimal solution. The relation between $M$ and the approximation bound 
on the actual data is provided in Fig. \ref{mvsapprox} 
in Section \ref{subsec:selection}.

\section*{Acknowledgment}
This research was funded in part by the Department of energy ARPA-E under award number DE-AR0000018, the California Energy Commission under award number PIR-10-054, and Precourt Energy Efficiency Center. The views and opinions of the authors expressed herein do not necessarily state or reflect those of the United States Government or any agency thereof.
 
We acknowledge the support of PG$\&$E for providing the anonymized data and in particular Brian Smith for his comments on this work. We also want to thank Sam Borgeson and Adrian Albert for their efforts to create the weather database.
\ifCLASSOPTIONcaptionsoff
  \newpage
\fi

    \bibliographystyle{ieeetr}
    \bibliography{papers}

\end{document}